\begin{document}

\title*{A logistic regression analysis approach for sample survey data based
on phi-divergence measures}
\titlerunning{A logistic regression analysis approach for sample survey data}
\author{Elena Castilla, Nirian Mart\'{i}n \and
Leandro Pardo\inst{*}}
\institute{ \inst{*}Complutense University of Madrid, 28040 Madrid, Spain \texttt{lpardo@mat.ucm.es} 
\begin{quotation}
I had the great honor to have Pedro as a friend; a deeply beloved friend. We have shared not only many scientific interests in connection with Statistical Information Theory but also, in accordance with our frequent conversations, many similarities in our youthful life experiences. The research in this paper, written with two coauthors, is dedicated in memoriam to Professor Pedro Gil.
\end{quotation}
}
%
%
\maketitle

\abstract{A new family of minimum distance estimators for binary logistic regression models based on $\phi$-divergence measures is introduced. The so called ``pseudo minimum phi-divergence estimator''(PM$\phi$E) family is presented as an extension of `` minimum phi-divergence estimator'' (M$\phi$E) for general sample survey designs and contains, as a particular case, the pseudo maximum likelihood estimator (PMLE) considered in Roberts et al. \cite{r}. Through a simulation study it is shown that some PM$\phi$Es have a better behaviour, in terms of efficiency, than the PMLE.\\ }

\noindent\textbf{Keywords and phrases:} Logistic regression
models, Sample survey data, Phi-divergence measures, Maximum likelihood estimator, Pseudo maximum likelihood estimator.

\abstract*{A new family of minimum distance estimators for binary logistic regression models based on $\phi$-divergence measures is introduced. The so called ``pseudo minimum phi-divergence estimator''(PM$\phi$E) family is presented as an extension of M$\phi$E for general sample survey designs and contains, as a particular case, the pseudo maximum likelihood estimator (PMLE) considered in Roberts et al.(1987). Through a simulation study it is shown that some PM$\phi$Es have a better behaviour, in terms of efficiency, than the PMLE.}

\section{Introduction}
\label{EleNirLea_sec:1}

Suppose that the population of interest is partitioned into $I$ cells or
domains according to the levels of one or more factors. Let $N_{i}$
($i=1,...,I$) denote the $i$-th domain size, $N=%
{\textstyle\sum_{i=1}^{I}}
N_{i}$ the population domain total and $N_{i1}$, the population counts, out of
$N_{i}$, where the binary response ($0$ for failure and $1$ for success)
variable is equal to $1$. Since $N_{i1}$ and $N_{i}$\ are fixed but unknown
values ($i=1,...,I$), $\widehat{N}_{i}$ denotes the survey estimator of the
$i$-th domain size $N_{i}$ and $\widehat{N}_{i1}$ the corresponding estimate
of the successful events $N_{i1}$. The ratio estimator $\widehat{p}%
_{i}=\widehat{N}_{i1}/\widehat{N}_{i},i=1,...,I$, is often used to estimate
the population proportion of successful events, $\pi_{i}=\frac{N_{i1}}{N_{i}%
},$ $i=1,...,I.$ Standard sampling theory provides an estimator of the
covariance matrix of the $\widehat{\boldsymbol{p}}=\left(  \widehat{p}%
_{1},...,\widehat{p}_{I}\right)  ^{T}$. Another choice is using the logistic
regression
\begin{equation}
\pi\left(  \boldsymbol{x}_{i}^{T}\boldsymbol{\beta}\right)  =\frac
{\exp\{\boldsymbol{x}_{i}^{T}\boldsymbol{\beta}\}}{1+\exp\{\boldsymbol{x}%
_{i}^{T}\boldsymbol{\beta}\}}=\frac{\exp\{\beta_{0}+%
{\textstyle\sum\limits_{s=1}^{k}}
\beta_{j}x_{ij}\}}{1+\exp\{\beta_{0}+%
{\textstyle\sum\limits_{s=1}^{k}}
\beta_{j}x_{ij}\}},\text{ }i=1,...,I,  \label{EleNirLea_eq:0} 
\end{equation}
to modelize the population proportion of successful events, $$\pi_{i}%
=\pi\left(  \boldsymbol{x}_{i}^{T}\boldsymbol{\beta}\right)  =\frac{N_{i1}%
}{N_{i}},$$ which is assumed to depend on constants $x_{ij}$, $j=1,...,k$
($k<I$) derived from the factor levels, summarized in a ($k+1$)-vector of
known constants $\boldsymbol{x}_{i}=\left(  1,x_{i1},...,x_{ik}\right)  ^{T}$,
and also on a ($k+1$)-vector of parameters $\boldsymbol{\beta}=(\beta
_{0},\beta_{1},...,\beta_{k})^{T}$.

Under independent binomial sampling in each domain, it is well-known that the
maximum likelihood estimator (MLE) of $\boldsymbol{\beta}$,
$\widehat{\boldsymbol{\beta}}$, is obtained through iterative calculations
from the following likelihood equations
\begin{equation}
\boldsymbol{X}^{T}\mathrm{diag}(\boldsymbol{n})\boldsymbol{\pi}\left(
\boldsymbol{\beta}\right)  =\boldsymbol{X}^{T}\mathrm{diag}(\boldsymbol{n}%
)\widehat{\boldsymbol{q}}, \label{EleNirLea_eq:0.1}%
\end{equation}
where $\boldsymbol{X}\negthinspace=(\boldsymbol{x}_{1},...,\boldsymbol{x}_{I})^{T}$ is a
full rank matrix, $\boldsymbol{\pi}\left(  \boldsymbol{\beta}\right)\negthinspace  =\left(
\pi\left(  \boldsymbol{x}_{1}^{T}\boldsymbol{\beta}\right)  ,...,\pi\left(
\boldsymbol{x}_{I}^{T}\boldsymbol{\beta}\right)  \right)^{T}\negthinspace$ ,
$\widehat{\boldsymbol{q}}=\left(  \widehat{q}_{1},...,\widehat{q}_{I}\right)
^{T}$ with $\widehat{q}_{i}=n_{i1}/n_{i}$, $n_{i}$ being the sample size from
the $i$-th domain, $n=%
{\textstyle\sum\nolimits_{i=1}^{I}}
n_{i}$ the $i$-th sample domain total and $n_{i1}$ the sample total of
successful events the $i$-th domain. If we consider the probability vectors%
\begin{align*}
\widehat{\boldsymbol{p}}^{\ast}  &  =\left(  \frac{n_{1}}{n}\widehat{q}%
_{1},\frac{n_{1}}{n}\left(  1-\widehat{q}_{1}\right)  ,...,\frac{n_{I}}%
{n}\widehat{q}_{I},\frac{n_{I}}{n}\left(  1-\widehat{q}_{I}\right)  \right)
^{T}\\
&  =\left(  \frac{n_{11}}{n},\frac{n_{1}-n_{11}}{n},...,\frac{n_{I1}}{n}%
,\frac{n_{I}-n_{I1}}{n}\right)  ^{T}\\
&  =\left(\frac{n_{11}}{n},\frac{n_{12}}{n},...,\frac{n_{I1}}{n},\frac{n_{I2}}%
{n}\right)^{T},\text{ \qquad}(n_{i2}=n_{i}-n_{i1}),
\end{align*}
and
 \begin{eqnarray*}
\boldsymbol{p}^{\ast}\negthinspace\left(  \boldsymbol{\beta}\right)\negthinspace=\left(  \textstyle\frac{n_{1}%
}{n}\pi\left(  \boldsymbol{x}_{1}^{T}\boldsymbol{\beta}\right) \negthinspace ,\frac{n_{1}%
}{n}\left(  1-\pi\left(  \boldsymbol{x}_{1}^{T}\boldsymbol{\beta}\right)
\right)  ,\dots,\frac{n_{I}}{n}\pi\left(  \boldsymbol{x}_{I}^{T}%
\boldsymbol{\beta}\right)\negthinspace  ,\frac{n_{I}}{n}\left(  1-\pi\left(  \boldsymbol{x}%
_{I}^{T}\boldsymbol{\beta}\right)  \right)  \right)  ^{T}%
 \end{eqnarray*}
the MLE of\ $\boldsymbol{\beta}$, $\widehat{\boldsymbol{\beta}}$, can be
equivalently defined by
\[
\widehat{\boldsymbol{\beta}}=\arg\min_{\boldsymbol{\beta\in%
\mathbb{R}
}^{k+1}}d_{Kullback}\left(  \widehat{\boldsymbol{p}}^{\ast},\boldsymbol{p}%
^{\ast}\left(  \boldsymbol{\beta}\right)  \right)  ,
\]
where \ $d_{Kullback}\left(  \widehat{\boldsymbol{p}}^{\ast},\boldsymbol{p}%
^{\ast}\left(  \boldsymbol{\beta}\right)  \right)  $ is the Kullback-Leibler
divergence between the probability vectors $\widehat{\boldsymbol{p}}^{\ast}$
and $\boldsymbol{p}^{\ast}\left(  \boldsymbol{\beta}\right)  $ defined by
\[
d_{Kullback}\left(  \widehat{\boldsymbol{p}}^{\ast},\boldsymbol{p}^{\ast
}\left(  \boldsymbol{\beta}\right)  \right)\negthinspace  =\sum_{i=1}^{I}\left[
\frac{n_{i1}}{n}\log\frac{n_{i1}}{n_{i}\pi\left(  \boldsymbol{x}_{i}%
^{T}\boldsymbol{\beta}\right)  }+\frac{n_{i2}}{n}\log\frac{n_{i2}}%
{n_{i}\left(  1-\pi\left(  \boldsymbol{x}_{i}^{T}\boldsymbol{\beta}\right)
\right)  }\right] \negthinspace .
\]
In Pardo et al. \cite{p3} the
minimum phi-divergence estimator (M$\phi$E) was introduced, as a natural extension of the MLE, as%
\begin{equation}
\widehat{\boldsymbol{\beta}}_{\phi}=\arg\min_{\boldsymbol{\beta\in%
\mathbb{R}
}^{k+1}}d_{\phi}\left(  \widehat{\boldsymbol{p}}^{\ast},\boldsymbol{p}^{\ast
}\left(  \boldsymbol{\beta}\right)  \right),  \label{EleNirLea_eq:0.21}%
\end{equation}
where $d_{\phi}\left(  \widehat{\boldsymbol{p}}^{\ast},\boldsymbol{p}^{\ast
}\left(  \boldsymbol{\beta}\right)  \right)  $ is the phi-divergence measure
between the probability vectors $\widehat{\boldsymbol{p}}^{\ast}$ and
$\boldsymbol{p}^{\ast}\left(  \boldsymbol{\beta}\right)  $ given by
 \begin{eqnarray*}
d_{\phi}\left(  \widehat{\boldsymbol{p}}^{\ast},\boldsymbol{p}^{\ast}\left(
\boldsymbol{\beta}\right)  \right)  =&\displaystyle \sum_{i=1}^{I}\frac{n_{i}}{n}\left[
\pi\left(  \boldsymbol{x}_{i},\boldsymbol{\beta}\right)  \phi\left(
\tfrac{n_{i1}}{n_{i}\pi\left(  \boldsymbol{x}_{i}^{T}\boldsymbol{\beta
}\right)  }\right)\right.\\  + &  \left. \left(  1-\pi\left(  \boldsymbol{x}_{i},\boldsymbol{\beta
}\right)  \right)  \phi\left(  \tfrac{n_{i2}}{n_{i}\left(  1-\pi\left(
\boldsymbol{x}_{i}^{T}\boldsymbol{\beta}\right)  \right)  }\right)  \right]
,
\end{eqnarray*}
with $\phi\in\Phi^{\ast}$. By $\Phi^{\ast}$ we are denoting the class of all
convex functions, $\phi\left(  x\right)  $, $x>0$, such that at $x=1$,
$\phi\left(  1\right)  =\phi^{\prime}\left(  1\right)  =0$, and at $x=0$,
$0\phi\left(  0/0\right)  =0$ and $0\phi\left(  p/0\right)  =p\lim
_{u\rightarrow\infty}\frac{\phi\left(  u\right)  }{u}$. For every $\phi\in
\Phi^{\ast}$, differentiable at $x=1$, the function
\[
\Psi\left(  x\right)  =\phi\left(  x\right)  -\phi^{\prime}\left(  1\right)
\left(  x-1\right)
\]
also belongs to $\Phi^{\ast}$. Therefore, we have $d_{\psi}\left(
\widehat{\boldsymbol{p}}^{\ast},\boldsymbol{p}^{\ast}\left(  \boldsymbol{\beta
}\right)  \right)  =d_{\phi}\left(  \widehat{\boldsymbol{p}}^{\ast
},\boldsymbol{p}^{\ast}\left(  \boldsymbol{\beta}\right)  \right)  $ and
$\psi$ has the additional property that $\psi^{\prime}(1)=0$. Since the two
divergence measures are equivalent, we can consider the set $\Phi^{\ast}$ to
be equivalent to the set
\[
\Phi=\Phi^{\ast}\cap\left\{  \phi:\phi^{\prime}(1)=0\right\}  .
\]
For more details  see Cressie and Pardo \cite{p1} and Pardo \cite{p2}. In what
follows, we give our theoretical results for $\phi\in\Phi$, but often apply
them to choices of functions in $\Phi^{\ast}$.

For general sample survey designs we do not have maximum likelihood estimators
due to difficulties in obtaining appropriate likelihood functions. Hence, it
is a common practice to use a pseudo maximum likelihood estimator (PMLE) of
$\boldsymbol{\beta}$, $\widehat{\boldsymbol{\beta}}_{P}$, obtained from
(\ref{EleNirLea_eq:0.1}) by replacing $n_{i}/n$, $i=1,...,I$, by the estimated domain
relative size $w_{i}=\widehat{N}_{i}/\widehat{N}$, $i=1,...,I,$ and the sample
proportions $\widehat{q}_{i}=n_{i1}/n_{i},$ $i=1,...,I,$ by the ratio estimate
$\widehat{p}_{i}=\widehat{N}_{i1}/\widehat{N}_{i},i=1,...,I,$%
\begin{equation}
\boldsymbol{X}^{T}\mathrm{diag}(\boldsymbol{w})\boldsymbol{\pi}\left(
\boldsymbol{\beta}\right)  =\boldsymbol{X}^{T}\mathrm{diag}(\boldsymbol{w}%
)\widehat{\boldsymbol{p}}, \label{EleNirLea_eq:0.3}%
\end{equation}
where $\boldsymbol{w}=(w_{1},...,w_{I})^{T}$.

In this paper we extend the concept of M$\phi$E by considering the
\textquotedblleft pseudo minimum phi-divergence estimator\textquotedblright%
\ (PM$\phi$E) as a natural extension of the PMLE\ and we solve some
statistical problem for the model considered in (\ref{EleNirLea_eq:0}). In Section
\ref{EleNirLea_sec:2} we shall introduce the PM$\phi$E\ for general sample designs and we
study its asymptotic behavior. A numerical example is presented in Section \ref{EleNirLea_sec:3} and, finally, in Section \ref{EleNirLea_sec:4} a simulation study is carried out.

\section{Pseudo minimum phi-divergence estimator for general sample designs}
\label{EleNirLea_sec:2}

For general sample designs, we should consider the kernel of the weighted
loglikelihood%
\[
\ell_{\boldsymbol{w}}\left(  \boldsymbol{\beta}\right)  =n\sum_{i=1}^{I}%
w_{i}\left[  \widehat{p}_{i}\log\pi\left(  \boldsymbol{x}_{i}^{T}%
\boldsymbol{\beta}\right)  +(1-\widehat{p}_{i})\log(1-\pi\left(
\boldsymbol{x}_{i}^{T}\boldsymbol{\beta}\right)  )\right]  ,
\]
which is derived from the kernel of the likelihood for $I$ independent binomial
random variables
\begin{align*}
\ell\left(  \boldsymbol{\beta}\right)   &  =\sum_{i=1}^{I}n_{i}\left[
\widehat{q}_{i}\log\pi\left(  \boldsymbol{x}_{i}^{T}\boldsymbol{\beta}\right)
+(1-\widehat{q}_{i})\log(1-\pi\left(  \boldsymbol{x}_{i}^{T}\boldsymbol{\beta
}\right)  )\right] \\
&  =n\sum_{i=1}^{I}\frac{n_{i}}{n}\left[  \widehat{q}_{i}\log\pi\left(
\boldsymbol{x}_{i}^{T}\boldsymbol{\beta}\right)  +(1-\widehat{q}_{i}%
)\log(1-\pi\left(  \boldsymbol{x}_{i}^{T}\boldsymbol{\beta}\right)  )\right]
,
\end{align*}
replacing $\frac{n_{i}}{n}$ by $w_{i}=\widehat{N}_{i}/\widehat{N}$, and
$\widehat{q}_{i}=n_{i1}/n_{i}$ by $\widehat{p}_{i}=\widehat{N}_{i1}%
/\widehat{N}_{i}$, $i=1,...,I$. If we consider the two probability vectors%
\[
\widehat{\boldsymbol{p}}_{\boldsymbol{w}}=\left(  w_{1}\widehat{p}_{1}%
,w_{1}\left(  1-\widehat{p}_{1}\right)  ,...,w_{I}\widehat{p}_{I},w_{I}\left(
1-\widehat{p}_{I}\right)  \right)  ^{T}%
\]
and
\[
\boldsymbol{p}_{\boldsymbol{w}}\negthinspace\left(  \boldsymbol{\beta}\right)\negthinspace  =\left(
w_{1}\pi\left(  \boldsymbol{x}_{1}^{T}\boldsymbol{\beta}\right)\negthinspace  ,w_{1}\left(
1-\pi\left(  \boldsymbol{x}_{1}^{T}\boldsymbol{\beta}\right)  \right)
,...,w_{I}\pi\left(  \boldsymbol{x}_{I}^{T}\boldsymbol{\beta}\right)
\negthinspace,w_{I}\left(  1-\pi\left(  \boldsymbol{x}_{I}^{T}\boldsymbol{\beta}\right)
\right)  \right)\negthinspace  ^{T}\negthinspace,
\]
we get%
\[
\ell_{\boldsymbol{w}}\left(  \boldsymbol{\beta}\right)  =-nd_{Kullback}\left(
\widehat{\boldsymbol{p}}_{\boldsymbol{w}},\boldsymbol{p}_{\boldsymbol{w}%
}\left(  \boldsymbol{\beta}\right)  \right)  +k,
\]
where $k$ is a constant not depending on $\boldsymbol{\beta}$. Therefore the
PMLE of $\boldsymbol{\beta}$, $\widehat{\boldsymbol{\beta}}_{P}$, presented in
(\ref{EleNirLea_eq:0.3}) can be defined as
\[
\widehat{\boldsymbol{\beta}}_{P}=\arg\max_{\boldsymbol{\beta\in%
\mathbb{R}
}^{k+1}}\ell_{\boldsymbol{w}}\left(  \boldsymbol{\beta}\right)  =\arg
\min_{\boldsymbol{\beta\in%
\mathbb{R}
}^{k+1}}d_{Kullback}\left(  \widehat{\boldsymbol{p}}_{\boldsymbol{w}%
},\boldsymbol{p}_{\boldsymbol{w}}\left(  \boldsymbol{\beta}\right)  \right)
.
\]
Based on the previous interpretation of the PMLE, in the following definition
we shall present the PM$\phi$E.

\begin{definition}
The PM$\phi$E in a general sample design for the parameter $\boldsymbol{\beta
}$ in the model considered in (\ref{EleNirLea_eq:0}) is defined as
\[
\widehat{\boldsymbol{\beta}}_{\phi,P}=\arg\min_{\boldsymbol{\beta\in%
\mathbb{R}
}^{k+1}}d_{\phi}\left(  \widehat{\boldsymbol{p}}_{\boldsymbol{w}%
},\boldsymbol{p}_{\boldsymbol{w}}\left(  \boldsymbol{\beta}\right)  \right)
,
\]
where%
 \begin{eqnarray*}
d_{\phi}\left(  \widehat{\boldsymbol{p}}_{\boldsymbol{w}},\boldsymbol{p}%
_{\boldsymbol{w}}\left(  \boldsymbol{\beta}\right)  \right)  =&\displaystyle \sum_{i=1}%
^{I}w_{i}\left[  \pi\left(  \boldsymbol{x}_{i},\boldsymbol{\beta}\right)
\phi\left(  \tfrac{\widehat{p}_{i}}{\pi\left(  \boldsymbol{x}_{i}%
^{T}\boldsymbol{\beta}\right)  }\right)\right.\\ + & \left. \left(  1-\pi\left(  \boldsymbol{x}%
_{i},\boldsymbol{\beta}\right)  \right)  \phi\left(  \tfrac{1-\widehat{p}_{i}%
}{1-\pi\left(  \boldsymbol{x}_{i}^{T}\boldsymbol{\beta}\right)  }\right)
\right]
 \end{eqnarray*}
is the phi-divergence measure between the probability vectors
$\widehat{\boldsymbol{p}}_{\boldsymbol{w}}$ and $\boldsymbol{p}%
_{\boldsymbol{w}}\left(  \boldsymbol{\beta}\right)  $.
\end{definition}

The following result establishes the asymptotic distribution of the PM$\phi$E
of $\boldsymbol{\beta}$, $\widehat{\boldsymbol{\beta}}_{\phi,P}$.

\begin{theorem}
Let us assume that $\boldsymbol{\beta}_{0}$ is the true value of
$\boldsymbol{\beta}$ and%
\begin{align*}
&  \boldsymbol{w}\underset{n\rightarrow\infty}{\overset{p}{\longrightarrow}%
}\boldsymbol{W},\qquad\boldsymbol{W}=(W_{1},...,W_{I})^{T},\qquad W_{i}%
=\frac{N_{i}}{N},\\
&  \widehat{\boldsymbol{p}}\underset{n\rightarrow\infty
}{\overset{p}{\longrightarrow}}\boldsymbol{\pi}\left(  \boldsymbol{\beta}%
_{0}\right)  ,\qquad\sqrt{n}(\widehat{\boldsymbol{p}}-\boldsymbol{\pi}\left(
\boldsymbol{\beta}_{0}\right)  )\underset{n\rightarrow\infty
}{\overset{\mathcal{L}}{\longrightarrow}}\mathcal{N}\left(  \boldsymbol{0}%
,\boldsymbol{V}\right)  .
\end{align*}
Then, we have
\[
\sqrt{n}(\widehat{\boldsymbol{\beta}}_{\phi,P}-\boldsymbol{\beta}%
_{0})\underset{n\rightarrow\infty}{\overset{\mathcal{L}}{\longrightarrow}%
}\mathcal{N}\left(  \boldsymbol{0}_{k+1},\boldsymbol{V}\left(
\boldsymbol{\beta}_{0}\right)  \right)  ,
\]
where%
\begin{align}
\boldsymbol{V}\left(  \boldsymbol{\beta}_{0}\right)   &  =(\boldsymbol{X}%
^{T}\boldsymbol{\Delta X})^{-1}\boldsymbol{X}^{T}\mathrm{diag}\left(
\boldsymbol{W}\right)  \boldsymbol{V}\mathrm{diag}\left(  \boldsymbol{W}%
\right)  \boldsymbol{X}(\boldsymbol{X}^{T}\boldsymbol{\Delta X})^{-1}%
,\label{EleNirLea_eq:var}\\
\boldsymbol{\Delta} &  =\mathrm{diag}\{W_{i}\pi\left(  \boldsymbol{x}_{i}%
^{T}\boldsymbol{\beta}_{0}\right)  \left(  1-\pi\left(  \boldsymbol{x}_{i}%
^{T}\boldsymbol{\beta}_{0}\right)  \right)  \}_{i=1,...,I}.\nonumber
\end{align}

\end{theorem}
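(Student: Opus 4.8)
The plan is to characterize $\widehat{\boldsymbol{\beta}}_{\phi,P}$ through its estimating equations and then linearize those equations about $\boldsymbol{\beta}_0$, thereby reducing the asymptotic law of the estimator to that of $\widehat{\boldsymbol{p}}$, which is supplied by the hypotheses. First I would write the first-order condition $\tfrac{\partial}{\partial\boldsymbol{\beta}}d_{\phi}(\widehat{\boldsymbol{p}}_{\boldsymbol{w}},\boldsymbol{p}_{\boldsymbol{w}}(\boldsymbol{\beta}))=\boldsymbol{0}$ at $\boldsymbol{\beta}=\widehat{\boldsymbol{\beta}}_{\phi,P}$. Writing $\pi_i=\pi(\boldsymbol{x}_i^T\boldsymbol{\beta})$, using $\partial\pi_i/\partial\boldsymbol{\beta}=\pi_i(1-\pi_i)\boldsymbol{x}_i$, and abbreviating $u_i=\widehat{p}_i/\pi_i$ and $v_i=(1-\widehat{p}_i)/(1-\pi_i)$, the chain rule collapses the derivative into the compact form
\begin{equation*}
\boldsymbol{\Psi}_n(\boldsymbol{\beta})=\sum_{i=1}^{I}w_i\,\pi_i(1-\pi_i)\,T_i\,\boldsymbol{x}_i=\boldsymbol{0},\qquad T_i=g(u_i)-g(v_i),
\end{equation*}
where $g(t)=\phi(t)-t\phi'(t)$, so that $g'(t)=-t\phi''(t)$. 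The crucial structural fact, which I would check at the outset, is that $\phi\in\Phi$ forces $g(1)=\phi(1)-\phi'(1)=0$; hence $T_i$ vanishes whenever $\widehat{p}_i=\pi_i$, which guarantees that $\boldsymbol{\beta}_0$ solves the limiting equation when $\widehat{\boldsymbol{p}}=\boldsymbol{\pi}(\boldsymbol{\beta}_0)$.

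The next step is a two-variable Taylor expansion of $\boldsymbol{\Psi}_n$ about $(\boldsymbol{\beta}_0,\boldsymbol{\pi}(\boldsymbol{\beta}_0))$, where $u_i=v_i=1$ and $g'(1)=-\phi''(1)$. Evaluating the two needed Jacobians at this point, and noting that the prefactor $w_i\pi_i(1-\pi_i)\boldsymbol{x}_i$ contributes nothing to the $\boldsymbol{\beta}$-derivative because it multiplies $T_i=0$, I would obtain
\begin{align*}
\frac{\partial\boldsymbol{\Psi}_n}{\partial\boldsymbol{\beta}^T}\bigg|_0&=\phi''(1)\,\boldsymbol{X}^T\boldsymbol{\Delta}\boldsymbol{X}, &\frac{\partial\boldsymbol{\Psi}_n}{\partial\widehat{\boldsymbol{p}}^T}\bigg|_0&=-\phi''(1)\,\boldsymbol{X}^T\mathrm{diag}(\boldsymbol{w}),
\end{align*}
where passing from $\boldsymbol{w}$ to $\boldsymbol{W}$ in the first matrix is justified by $\boldsymbol{w}\overset{p}{\longrightarrow}\boldsymbol{W}$. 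Setting $\boldsymbol{\Psi}_n(\widehat{\boldsymbol{\beta}}_{\phi,P})=\boldsymbol{0}$, solving the linearized system, and cancelling the common factor $\phi''(1)\neq0$ then yields the stochastic expansion
\begin{equation*}
\sqrt{n}(\widehat{\boldsymbol{\beta}}_{\phi,P}-\boldsymbol{\beta}_0)=(\boldsymbol{X}^T\boldsymbol{\Delta}\boldsymbol{X})^{-1}\boldsymbol{X}^T\mathrm{diag}(\boldsymbol{w})\,\sqrt{n}(\widehat{\boldsymbol{p}}-\boldsymbol{\pi}(\boldsymbol{\beta}_0))+o_p(1).
\end{equation*}
It is worth stressing that $\phi''(1)$ cancels, so the first-order asymptotics coincide for every $\phi\in\Phi$, the Kullback case recovering the PMLE; any efficiency gains reported later must therefore be finite-sample, higher-order effects. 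To conclude, I would invoke the hypotheses directly: since $\boldsymbol{w}\overset{p}{\longrightarrow}\boldsymbol{W}$ and $\sqrt{n}(\widehat{\boldsymbol{p}}-\boldsymbol{\pi}(\boldsymbol{\beta}_0))\overset{\mathcal{L}}{\longrightarrow}\mathcal{N}(\boldsymbol{0},\boldsymbol{V})$, Slutsky's theorem applied to the linear map $(\boldsymbol{X}^T\boldsymbol{\Delta}\boldsymbol{X})^{-1}\boldsymbol{X}^T\mathrm{diag}(\boldsymbol{W})$ delivers the stated normal limit, whose covariance is exactly $\boldsymbol{V}(\boldsymbol{\beta}_0)$ in $(\ref{EleNirLea_eq:var})$.

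I expect the \emph{main obstacle} to be the rigorous justification of the linearization rather than the algebra. One must first establish (or assume) consistency $\widehat{\boldsymbol{\beta}}_{\phi,P}\overset{p}{\longrightarrow}\boldsymbol{\beta}_0$ so that the Taylor expansion is legitimate at an intermediate point lying in a shrinking neighborhood of $\boldsymbol{\beta}_0$, and then control the second-order remainder uniformly, exploiting continuity and boundedness of $\phi''$ near $1$ together with $\widehat{\boldsymbol{p}}\overset{p}{\longrightarrow}\boldsymbol{\pi}(\boldsymbol{\beta}_0)$; the full-rank assumption on $\boldsymbol{X}$ is what makes $\boldsymbol{\Delta}$ positive and $\boldsymbol{X}^T\boldsymbol{\Delta}\boldsymbol{X}$ invertible, so that the linearized system can be inverted. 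The randomness of $\boldsymbol{w}$ requires only mild care, since it enters through the smooth prefactor and converges in probability, and is thus absorbed into the $o_p(1)$ term.
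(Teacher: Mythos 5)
Your proposal is correct and reaches exactly the stochastic expansion that drives the paper's proof, but it gets there by a different, more self-contained route. The paper simply invokes Theorem 1 of Pardo et al.\ (2005) to obtain
\[
\widehat{\boldsymbol{\beta}}_{\phi,P}-\boldsymbol{\beta}_{0}
=(\boldsymbol{X}^{T}\boldsymbol{\Delta X})^{-1}\boldsymbol{X}^{T}
\mathrm{diag}\left(  \boldsymbol{w}\right)  \left(  \widehat{\boldsymbol{p}}-\boldsymbol{\pi}\left(  \boldsymbol{\beta}_{0}\right)  \right)+o_{p},
\]
(after simplifying the $\boldsymbol{c}_{i}$-notation of the cited result), and then finishes with Slutsky exactly as you do. You instead re-derive that expansion from first principles: writing the estimating equations via $g(t)=\phi(t)-t\phi'(t)$, observing $g(1)=0$ so that $\boldsymbol{\beta}_0$ solves the limiting equation, and linearizing in both arguments; your Jacobian computations are correct (I checked that $g'(1)=-\phi''(1)$ produces $\phi''(1)\boldsymbol{X}^{T}\boldsymbol{\Delta}\boldsymbol{X}$ and $-\phi''(1)\boldsymbol{X}^{T}\mathrm{diag}(\boldsymbol{w})$, and that the Kullback case $g(t)=1-t$ recovers the PMLE equations (\ref{EleNirLea_eq:0.3})). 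What your route buys is transparency: it makes explicit that $\phi''(1)$ cancels, hence that the asymptotic covariance (\ref{EleNirLea_eq:var}) is the same for all $\phi\in\Phi$ with $\phi''(1)\neq 0$ --- an implicit assumption you correctly surface, and an observation the paper never states even though it explains why the simulated differences among PM$\phi$Es must be finite-sample effects. What the paper's route buys is brevity and the delegation of the regularity work (consistency of the estimator, control of the remainder) to the cited theorem; in your version those issues resurface and you rightly flag them as the genuine technical burden, sketching the standard M-estimation argument without fully executing it. That omission matches the level of rigor of the paper itself, so I do not regard it as a gap.
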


\begin{proof}
Based on Theorem 1 in Pardo et al. \cite{p3}, we have%
\begin{align*}
\widehat{\boldsymbol{\beta}}_{\phi,P} &\negthinspace  =\boldsymbol{\beta}_{0}%
\negthinspace+ \negthinspace(\boldsymbol{X}^{T}\boldsymbol{\Delta X})^{-1}\boldsymbol{X}^{T}%
\mathrm{diag}\left\{  \boldsymbol{c}_{i}^{T}\right\}  _{i=1}^{I}%
\mathrm{diag}^{-1/2}\left(  \boldsymbol{p}_{\boldsymbol{w}}\negthinspace\left(
\boldsymbol{\beta}_{0}\right)  \right)  \left(  \widehat{\boldsymbol{p}%
}_{\boldsymbol{w}}-\boldsymbol{p}_{\boldsymbol{w}}\negthinspace\left(  \boldsymbol{\beta
}_{0}\right)  \right)  \\
&  +o\left(  \left\Vert \mathrm{diag}\left\{  \boldsymbol{c}_{i}^{T}\right\}
_{i=1}^{I}\mathrm{diag}^{-1/2}\left(  \boldsymbol{p}_{\boldsymbol{w}}\negthinspace\left(
\boldsymbol{\beta}_{0}\right)  \right)  \left(  \widehat{\boldsymbol{p}%
}_{\boldsymbol{w}}-\boldsymbol{p}_{\boldsymbol{w}}\negthinspace\left(  \boldsymbol{\beta
}_{0}\right)  \right)  \right\Vert \boldsymbol{1}_{k+1}\right)  ,
\end{align*}
with%
\[
\boldsymbol{c}_{i}=\left(  w_{i}\pi\left(  \boldsymbol{x}_{i}^{T}%
\boldsymbol{\beta}_{0}\right)  \left(  1-\pi\left(  \boldsymbol{x}_{i}%
^{T}\boldsymbol{\beta}_{0}\right)  \right)  \right)  ^{1/2}\left(
\begin{array}
[c]{c}%
\left(  1-\pi\left(  \boldsymbol{x}_{i}^{T}\boldsymbol{\beta}_{0}\right)
\right)  ^{1/2}\\
-\pi\left(  \boldsymbol{x}_{i}^{T}\boldsymbol{\beta}_{0}\right)^{1/2}
\end{array}
\right)  ,\text{ }i=1,..,I.
\]
Since%
\[
\mathrm{diag}\left\{  \boldsymbol{c}_{i}^{T}\right\}  _{i=1}^{I}%
\mathrm{diag}^{-1/2}\left(  \boldsymbol{p}_{\boldsymbol{w}}\left(
\boldsymbol{\beta}_{0}\right)  \right)  \left(  \widehat{\boldsymbol{p}%
}_{\boldsymbol{w}}-\boldsymbol{p}_{\boldsymbol{w}}\left(  \boldsymbol{\beta
}_{0}\right)  \right)  =\mathrm{diag}\left(  \boldsymbol{w}\right)  \left(
\widehat{\boldsymbol{p}}-\boldsymbol{\pi}\left(  \boldsymbol{\beta}%
_{0}\right)  \right)  ,
\]
$\boldsymbol{w}\underset{n\rightarrow\infty}{\overset{p}{\longrightarrow}%
}\boldsymbol{W}$ and $\widehat{\boldsymbol{p}}\underset{n\rightarrow
\infty}{\overset{p}{\longrightarrow}}\boldsymbol{\pi}\left(  \boldsymbol{\beta
}_{0}\right)  $, it holds%
\[
\sqrt{n}(\widehat{\boldsymbol{\beta}}_{\phi,P}-\boldsymbol{\beta}%
_{0})=(\boldsymbol{X}^{T}\boldsymbol{\Delta X})^{-1}\boldsymbol{X}%
^{T}diag\left(  \boldsymbol{W}\right)  \sqrt{n}\left(  \widehat{\boldsymbol{p}%
}-\boldsymbol{\pi}\left(  \boldsymbol{\beta}_{0}\right)  \right)
+o_{p}(\boldsymbol{1}_{k+1}).
\]
From the Sluysky's theorem and taking into account $\sqrt{n}%
(\widehat{\boldsymbol{p}}-\boldsymbol{\pi}\left(  \boldsymbol{\beta}%
_{0}\right)  )\underset{n\rightarrow\infty}{\overset{\mathcal{L}%
}{\longrightarrow}}\mathcal{N}\left(  \boldsymbol{0}_{k+1},\boldsymbol{V}%
\right)  $, it follows the desired result.
\qed
\end{proof}

\begin{remark}
Under independent binomial sampling in each domain, it is well-known that
$\boldsymbol{V}=\mathrm{diag}\{\pi\left(  \boldsymbol{x}_{i}^{T}%
\boldsymbol{\beta}_{0}\right)  \left(  1-\pi\left(  \boldsymbol{x}_{i}%
^{T}\boldsymbol{\beta}_{0}\right)  \right)  \}_{i=1,...,I}\mathrm{diag}%
^{-1}\left(  \boldsymbol{W}\right)  $ and hence $\boldsymbol{V}\left(
\boldsymbol{\beta}_{0}\right)  =(\boldsymbol{X}^{T}\boldsymbol{\Delta X}%
)^{-1}$, which matches Theorem 2 in Pardo et al. \cite{p3}.
\end{remark}

\begin{remark}
The asymptotic results obtained in the current paper differ from Castilla et al. \cite{castilla} in the elements tending to infinite, here the total individuals in
the whole sample, $n$, while in the cited paper is the total number of clusters
what tends to infinite.
\end{remark}

\section{A numerical example}
\label{EleNirLea_sec:3}
In order to obtain the PM$\phi$Es, from a
practical point of view, we can give an explicit expression for $\phi $. In
this paper we shall focus on the Cressie-Read subfamily%
\[
\phi _{\lambda }(x)=\left\{ 
\begin{array}{ll}
\frac{1}{\lambda (1+\lambda )}\left[ x^{\lambda +1}-x-\lambda (x-1)\right] ,
& \lambda \in 
\mathbb{R}
-\{-1,0\} \\ 
\lim_{\upsilon \rightarrow \lambda }\frac{1}{\upsilon (1+\upsilon )}\left[
x^{\upsilon +1}-x-\upsilon (x-1)\right] , & \lambda \in \{-1,0\}%
\end{array}%
\right. . 
\]%
We can observe that for $\lambda =0$, we have 
\[
\phi _{\lambda =0}(x)=\lim_{\upsilon \rightarrow 0}\frac{1}{\upsilon
(1+\upsilon )}\left[ x^{\upsilon +1}-x-\upsilon (x-1)\right] =x\log x-x+1, 
\]%
and the associated phi-divergence, coincides with the Kullback divergence,
therefore the PM$\phi$Es based on $\phi_{\lambda }(x)$ contains as special case the PMLE.

We shall consider the example presented in Molina et al. \cite{molina}. A random
subsample of $50$ clusters (primary sampling units) containing $1299$
households was selected from the 1975 U.K. Family Expenditure Survey. These
households are divided into $12$ groups of sizes $n_{1},...,n_{12}$ by age
of head of household ($4$ levels) and number of persons in the household ($3$
levels). The binary response is $1$ if the household owns the dwelling it
occupies and $0$ otherwise. The number of households $r_{i}$ for which the
binary response is $1$, together with $n_{i}$ are shown in Table 1 of the
cited paper.

We denote by $\beta _{1(r)}$ the parameter associated to the level $r$ of
the factor \textquotedblleft age of head of housholds\textquotedblright , $%
r=2,3$ and $4$ since $\beta _{1(1)}=0$ and by $\beta _{2(s)}$ the parameter
associated to the level $s$ of the factor \textquotedblleft number of
persons in the housholds\textquotedblright , $\ s=2,3,$ since we assume $%
\beta _{2(1)}=0$. The parameter vector with unknown values will be denote by 
\[
\boldsymbol{\beta }=(\beta _{0},\beta _{1(2)},\beta _{1(3)},\beta
_{1(4)},\beta _{2(2)},\beta _{2(3)})^{T}.
\]%
The design matrix that we are going to consider for the example under
consideration is given by 
\[
\boldsymbol{X=}\left( 
\begin{array}{cccccccccccc}
1 & 1 & 1 & 1 & 1 & 1 & 1 & 1 & 1 & 1 & 1 & 1 \\ 
0 & 0 & 0 & 1 & 1 & 1 & 0 & 0 & 0 & 0 & 0 & 0 \\ 
0 & 0 & 0 & 0 & 0 & 0 & 1 & 1 & 1 & 0 & 0 & 0 \\ 
0 & 0 & 0 & 0 & 0 & 0 & 0 & 0 & 0 & 1 & 1 & 1 \\ 
0 & 1 & 0 & 0 & 1 & 0 & 0 & 1 & 0 & 0 & 1 & 0 \\ 
0 & 0 & 1 & 0 & 0 & 1 & 0 & 0 & 1 & 0 & 0 & 1%
\end{array}%
\right) ^{T}=(\boldsymbol{x}_{1},...,\boldsymbol{x}_{12})^{T}
\]%
and the logistic regression model under consideration is given by 
\[
\pi \left( \boldsymbol{x}_{i}^{T}\boldsymbol{\beta }\right) =\frac{\exp
\left\{ \boldsymbol{x}_{i}^{T}\boldsymbol{\beta }\right\} }{1+\exp \left\{ 
\boldsymbol{x}_{i}^{T}\boldsymbol{\beta }\right\} },\text{ }i=1,...,12,
\]%
equivalent to 
\[
\pi \left( \boldsymbol{x}_{i}^{T}\boldsymbol{\beta }\right) =\frac{\exp
\{\beta _{0}+\beta _{1(r)}+\beta _{2(s)}\}}{1+\exp \{\beta _{0}+\beta
_{1(r)}+\beta _{2(s)}\}},
\]%
if the $i$-th probability is associated with the $r$-th level of the first
variable ($r=1,...,4$) and the $s$-th level of the second variable ($s=1,...,3$).
In the following table we present the pseudo minimum Cressie-Read divergence estimators (PMCREs) of $\boldsymbol{\beta}$, $\boldsymbol{\beta}_{\lambda,P}$,
 for $\lambda \in \left\{ 0,2/3,1,2\right\} $.

\begin{table}[h]
\setlength{\tabcolsep}{6pt}
\centering
\caption{PMCREs for the clustered Family Expenditure Survey data model.}
\label{EleNirLea_tab:0}
\begin{tabular}{r cccccc}
\hline
$\lambda$ & $\widehat{\beta}_{0,\lambda,P}$ &$\widehat{\beta}_{1(2),\lambda,P}$ & $\widehat{\beta}_{1(3),\lambda,P}$ & $\widehat{\beta}_{1(4),\lambda,P}$ & $\widehat{\beta}_{2(2),\lambda,P}$       & $\widehat{\beta}_{2(3),\lambda,P}$ \\
\hline
0  & $-$0.1585 & 0.4403 & $-$0.1412 & $-$0.4179 & 0.5042 & 0.4703 \\
2/3  & $-$0.1564 & 0.4291 & $-$0.1436 & $-$0.4174 & 0.4985 & 0.4735 \\
1  & $-$0.1574 & 0.4251 & $-$0.1438 & $-$0.4158 & 0.4971 & 0.476  \\
2  & $-$0.1663 & 0.4192 & $-$0.1408 & $-$0.4075 & 0.4974 & 0.4856 \\
\hline
\end{tabular}
\end{table}

\section{Simulation Study}
\label{EleNirLea_sec:4}

The following simulation study has been designed by following the previous example. Since in the logistic regression model there are
two factors, the first one with $4$ categories and the second one with $3$
categories, in total $I=12$ domains are taken into account. Let%
\[
\boldsymbol{p}\left(  \boldsymbol{\beta}\right)  =\left(  \tfrac{N_{1}}{N}%
\pi\left(  \boldsymbol{x}_{1}^{T}\boldsymbol{\beta}\right)  ,\tfrac{N_{1}}%
{N}\left(  1-\pi\left(  \boldsymbol{x}_{1}^{T}\boldsymbol{\beta}\right)
\right)  ,...,\tfrac{N_{I}}{N}\pi\left(  \boldsymbol{x}_{I}^{T}%
\boldsymbol{\beta}\right)  ,\tfrac{N_{I}}{N}\left(  1-\pi\left(
\boldsymbol{x}_{I}^{T}\boldsymbol{\beta}\right)  \right)  \right)  ^{T}
\]
be the theoretical probability vector in the logistic regression with complex
sampling. The values of the components of $\boldsymbol{p}\left(
\boldsymbol{\beta}\right)  $ are given in Table \ref{EleNirLea_tab:1}. In total
$n=1299$ individuals are taken from the primary units of the sample, $J=50$
clusters, of size $m_{(j)}=26$, $j=1,...,49$, $m_{(50)}=25$ ($\sum_{j=1}%
^{50}m_{(j)}=n$). Since the clusters are mutually independent and there is
(possibly) correlation inside each cluster, we consider three possible
distributions for%
\[
(n_{11(j)},n_{1(j)}-n_{11(j)},n_{21(j)},n_{2(j)}-n_{21(j)},\ldots
,n_{12,1(j)},n_{12(j)}-n_{12,1(j)})^{T}
\]
corresponding to the $j$-th cluster (column, in Table \ref{EleNirLea_tab:2}),
$j=1,...,J=50$:

\begin{itemize}
\item Dirichlet-multinomial with parameters $(m_{(j)};\rho,\boldsymbol{p}%
\left(  \boldsymbol{\beta}\right)  )$, with $\rho\in\{\frac{1}{10}%
(i-1)\}_{i=1}^{10}$;

\item Random-clumped with parameters $(m_{(j)};\rho,\boldsymbol{p}\left(
\boldsymbol{\beta}\right)  )$, with $\rho\in\{\frac{1}{10}(i-1)\}_{i=1}^{10}$;

\item $m_{(j)}$-inflated with parameters $(m_{(j)};\rho,\boldsymbol{p}\left(
\boldsymbol{\beta}\right)  )$, with $\rho\in\{\frac{1}{10}(i-1)\}_{i=1}^{10}$.
\end{itemize}

For details about these distributions see Alonso et al. \cite{alonso}. The values of
interest for the sample are%
\[
n_{i1}=\sum_{j=1}^{50}n_{11(j)},\quad i=1,...,I\quad\text{and\quad}n_{i}%
=\sum_{j=1}^{50}(n_{11(j)}+n_{12(j)}),\quad i=1,...,I.
\]

\begin{table}[htbp]  \tabcolsep2.8pt  \centering
\caption{Theoretical values of $\boldsymbol{p}\left( \boldsymbol{\beta
}\right)$ in the simulation study.\label{EleNirLea_tab:1}}%
\begin{tabular}
[c]{ccccccccccccc}\hline
$i$ & $1$ & $2$ & $3$ & $4$ & $5$ & $6$ & $7$ & $8$ & $9$ & $10$ & $11$ &
$I=12$\\\hline
$\frac{N_{i}}{N}$ & $\frac{10}{1299}$ & $\frac{63}{1299}$ & $\frac{110}{1299}
$ & $\frac{14}{1299}$ & $\frac{35}{1299}$ & $\frac{281}{1299}$ & $\frac
{40}{1299}$ & $\frac{110}{1299}$ & $\frac{185}{1299}$ & $\frac{204}{1299}$ &
$\frac{196}{1299}$ & $\frac{51}{1299}$\\\rule{0pt}{2.85ex}
$\pi\left(  \boldsymbol{x}_{i}^{T}\boldsymbol{\beta}\right)  $ & $\frac{2}%
{10}$ & $\frac{38}{63}$ & $\frac{65}{110}$ & $\frac{6}{14}$ & $\frac{29}{35}$
& $\frac{188}{281}$ & $\frac{17}{40}$ & $\frac{56}{110}$ & $\frac{105}{185}$ &
$\frac{78}{204}$ & $\frac{93}{196}$ & $\frac{21}{51}$\\\rule{0pt}{2.85ex}
$\frac{N_{i}}{N}\pi\left(  \boldsymbol{x}_{i}^{T}\boldsymbol{\beta}\right)  $
& $\frac{2}{1299}$ & $\frac{38}{1299}$ & $\frac{65}{1299}$ & $\frac{6}{1299}$
& $\frac{29}{1299}$ & $\frac{188}{1299}$ & $\frac{17}{1299}$ & $\frac
{56}{1299}$ & $\frac{105}{1299}$ & $\frac{78}{1299}$ & $\frac{93}{1299}$ &
$\frac{21}{1299}$ \rule{0pt}{2.85ex}\\ \hline 
\end{tabular}

\end{table}%

\begin{table}[htbp]  \tabcolsep2.8pt  \centering
\caption{Scheme of a correlated sample generation through
clusters.\label{EleNirLea_tab:2}}%
\begin{tabular}
[c]{ccccccccc}\hline
$i$ & $j$ & $1$ & $2$ & $\cdots$ & $j$ & $\cdots$ & $J=50$ & sample\\\hline
$1$ &
\begin{tabular}
[c]{c}%
$k=1$\\
$k=2$%
\end{tabular}
&
\begin{tabular}
[c]{c}%
$n_{11(1)}$\\
$n_{12(1)}$%
\end{tabular}
&
\begin{tabular}
[c]{c}%
$n_{11(2)}$\\
$n_{12(2)}$%
\end{tabular}
& $\cdots$ &  & $\cdots$ &
\begin{tabular}
[c]{c}%
$n_{11(50)}$\\
$n_{12(50)}$%
\end{tabular}
&
\begin{tabular}
[c]{c}%
$n_{11}$\\
$n_{12}$%
\end{tabular}
\\
$2$ &
\begin{tabular}
[c]{c}%
$k=1$\\
$k=2$%
\end{tabular}
&
\begin{tabular}
[c]{c}%
$n_{21(1)}$\\
$n_{22(1)}$%
\end{tabular}
&
\begin{tabular}
[c]{c}%
$n_{21(2)}$\\
$n_{22(2)}$%
\end{tabular}
& $\cdots$ &  & $\cdots$ &
\begin{tabular}
[c]{c}%
$n_{21(50)}$\\
$n_{22(50)}$%
\end{tabular}
&
\begin{tabular}
[c]{c}%
$n_{21}$\\
$n_{22}$%
\end{tabular}
\\
$\vdots$ &  & $\vdots$ & $\vdots$ & $\ddots$ &  &  & $\vdots$ & $\vdots$\\
$i$ &
\begin{tabular}
[c]{c}%
$k=1$\\
$k=2$%
\end{tabular}
&
\begin{tabular}
[c]{c}%
$n_{i1(1)}$\\
$n_{i2(1)}$%
\end{tabular}
&
\begin{tabular}
[c]{c}%
$n_{i1(2)}$\\
$n_{i2(2)}$%
\end{tabular}
&  &
\begin{tabular}
[c]{c}%
$n_{i1(j)}$\\
$n_{i2(j)}$%
\end{tabular}
&  &
\begin{tabular}
[c]{c}%
$n_{i1(50)}$\\
$n_{i2(50)}$%
\end{tabular}
&
\begin{tabular}
[c]{c}%
$n_{i1}$\\
$n_{i2}$%
\end{tabular}
\\
$\vdots$ &  & $\vdots$ & $\vdots$ &  &  & $\ddots$ & $\vdots$ & $\vdots$\\
$I=12$ &
\begin{tabular}
[c]{c}%
$k=1$\\
$k=2$%
\end{tabular}
&
\begin{tabular}
[c]{c}%
$n_{12,1(1)}$\\
$n_{12,2(1)}$%
\end{tabular}
&
\begin{tabular}
[c]{c}%
$n_{12,1(2)}$\\
$n_{12,2(2)}$%
\end{tabular}
& $\cdots$ &  & $\cdots$ &
\begin{tabular}
[c]{c}%
$n_{12,1(50)}$\\
$n_{12,2(50)}$%
\end{tabular}
&
\begin{tabular}
[c]{c}%
$n_{12,1}$\\
$n_{12,2}$%
\end{tabular}
\\\hline
&  & $m_{(1)}$ & $m_{(2)}$ & $\cdots$ & $m_{(j)}$ & $\cdots$ & $m_{(50)}$ &
$n$\\\hline
\end{tabular}
$\ $

\end{table}%

Notice that the assumptions of Theorem 1 are held. In addition:

\begin{itemize}
\item If $\rho=0$ (multinomial distribution within each cluster), then
$\boldsymbol{V}$ is a diagonal matrix since the elements of
$\widehat{\boldsymbol{p}}$ are uncorrelated. In this case, we obtain MLEs and
M$\phi$Es.

\item If $\rho>0$, then $\boldsymbol{V}$ is not a diagonal matrix since the
elements of $\widehat{\boldsymbol{p}}$ are correlated. In this case, we obtain
PMLEs and PM$\phi$Es.
\end{itemize}

In these scenarios, the root of the mean square error (RMSE) for the PMCREs of  $\boldsymbol{\beta}$ are studied, considering different values of the tuning parameter $\lambda \in \{0,2/3,1,2 \}$. Note that when $\lambda=0$, the corresponding PMCRE of $\boldsymbol{\beta}$ is equal to the PMLE. 

Results of the simulation study with 2,000 samples are shown in Figure \ref{EleNirLea_fig:1}. As expected from a theoretical point of view, the RMSE increases as $\rho$ increases. With independence to the distribution considered, estimators corresponding to $\lambda \in \{2/3,1,2\}$ present a better performance than the PMLE ($\lambda=0$). This difference becomes more considerable for large values of  $\rho$. 
\begin{figure}[th]
 \includegraphics[scale=0.34]{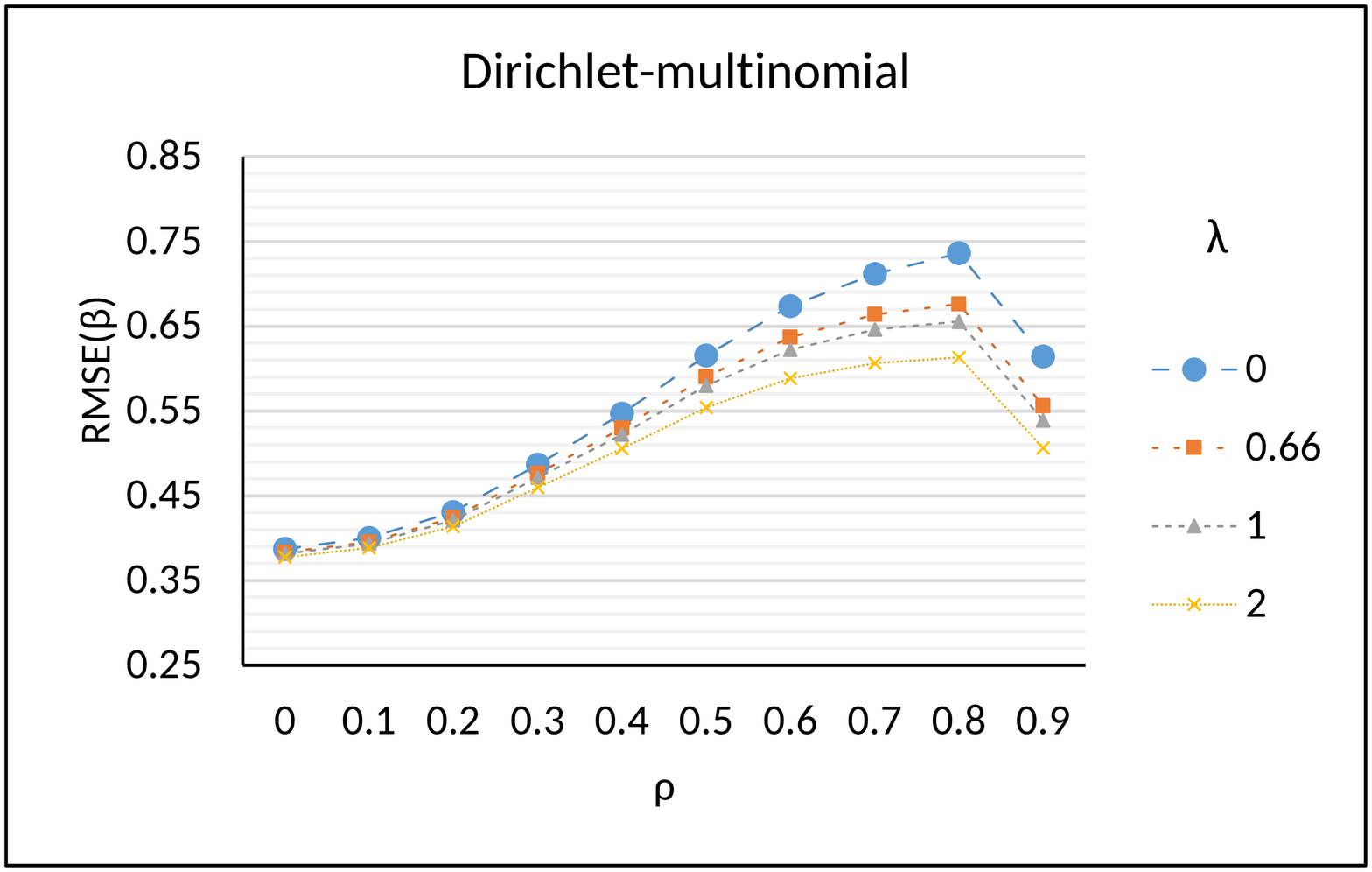}
\\
\includegraphics[scale=0.34]{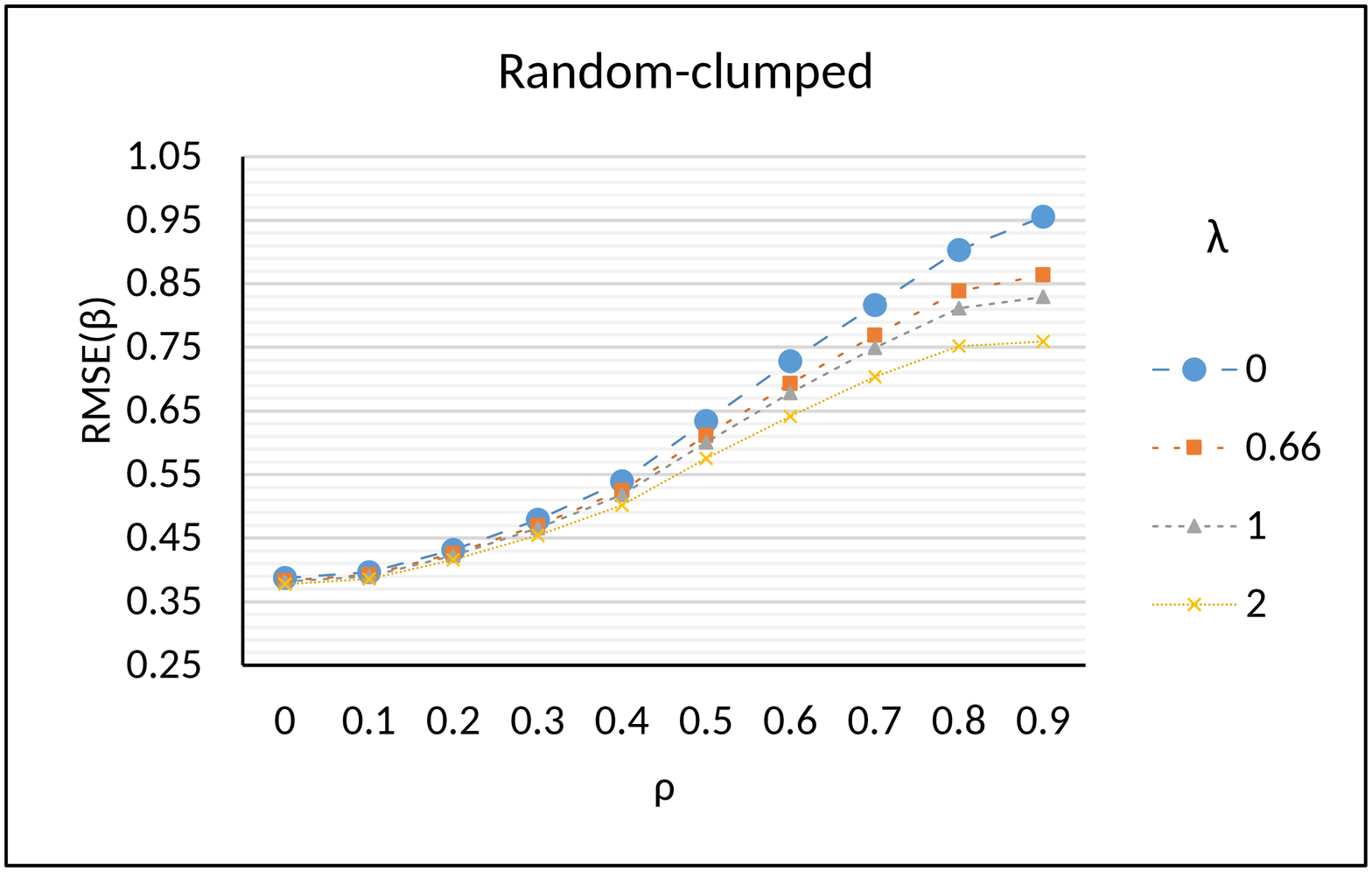}
\\
\includegraphics[scale=0.34]{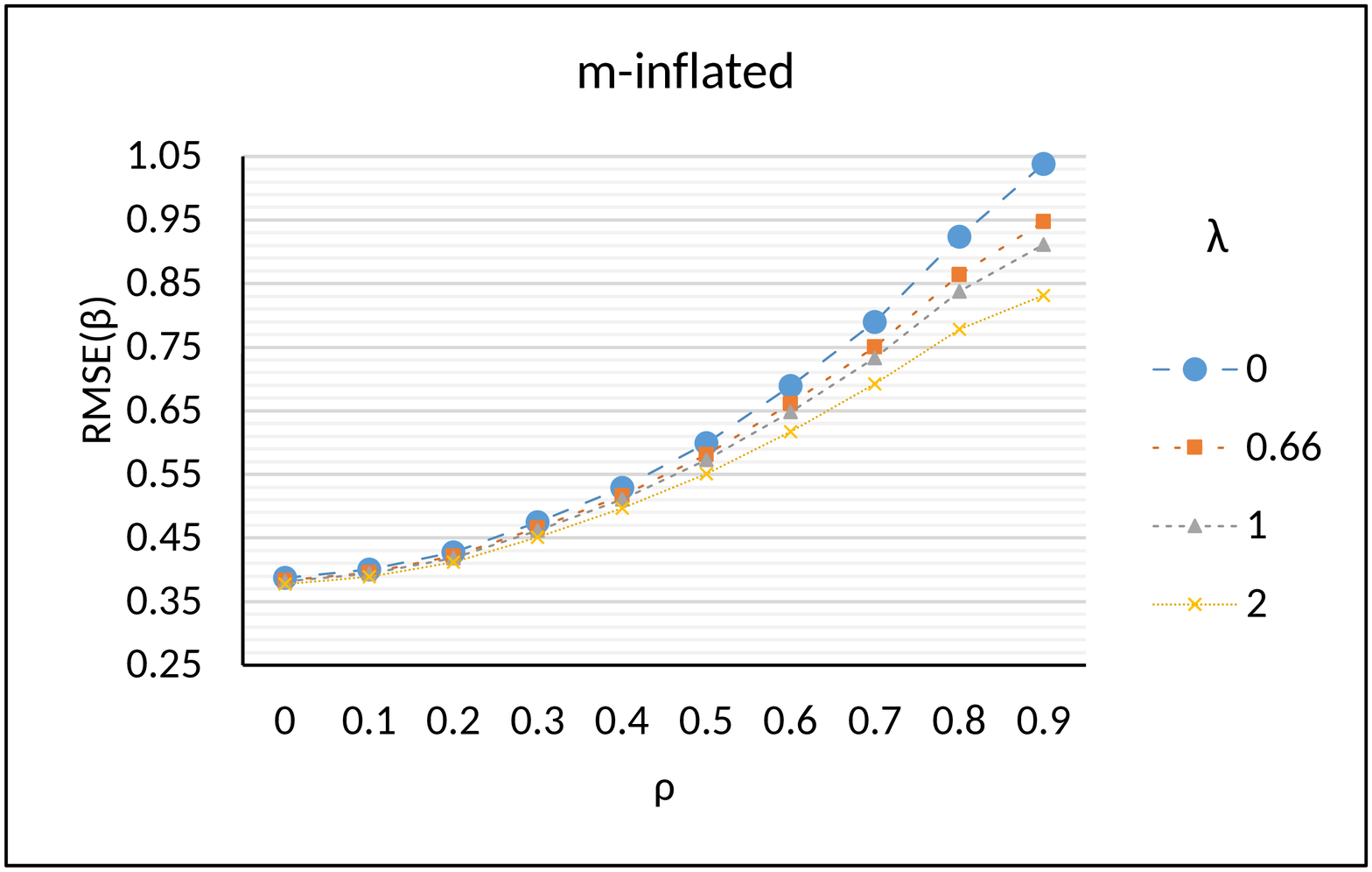}
\\
\caption{RMSEs for PMCREs of $\boldsymbol{\beta}$ with Dirichlet-multinomial (above), Random-clumped  (middle) and m-inflated (below) distributions.}
\label{EleNirLea_fig:1}       
\end{figure}

\section{Concluding remarks}
\label{EleNirLea_sec:5}
In this paper we have considered the problem of estimating the parameters of the logistic regression model for sample survey data, introducing the family of the PM$\phi$Es that contains as a particular case the PMLE. A simulation study is carried out in order to see that there are PM$\phi$Es that have a better behaviour than the PMLE in relation to the mean square error.

\clearpage

\begin{acknowledgement}
This research is partially supported by Grants MTM2015-67057-P(MINECO/FEDER) and ECO2015-66593, both from Ministerio de Economia y Competitividad (Spain).
\end{acknowledgement}

\bibliographystyle{spbasic}

\end{document}